\Crefname{subsection}{subsection}{subsections}
\newtheorem{theorem}{Theorem}[section]
\theoremstyle{definition}
\newtheorem{definition}{Definition}[section]
\newtheorem{assumption}{Assumption}
\newcommand{\Ppoly}{\textbf{P}/poly}
\newcommand{\NCo}{\textbf{NC}^{1}}
\newcommand{\iO}{i\mathcal{O}}
\newcommand{\ckt}{\mathcal{C}}
\newcommand{\Zp}{\mathbb{Z}_p}
\newcommand{\zo}{\{0,1\}}
\newcommand{\KeyG}{\textbf{KeyGen}}
\newcommand{\Encrypt}{\textbf{Enc}}
\newcommand{\Dec}{\textbf{Dec}}
\newcommand{\Eval}{\textbf{Eval}}
\newcommand{\getsr}{\mathrel{\mathpalette\rgetscmd\relax}}
\newcommand{\rgetscmd}{\ooalign{$\leftarrow$\cr
    \hidewidth\raisebox{1.2\height}{\scalebox{0.5}{\ \rm \$}}\hidewidth\cr}}
\newcommand{\tor}{\mathrel{\mathpalette\rtocmd\relax}}
\newcommand{\rtocmd}{\ooalign{$\rightarrow$\cr
    \hidewidth\raisebox{1.2\height}{\scalebox{0.5}{\ \rm \$}}\hidewidth\cr}}
\newcommand{\JGen}{\textbf{JGen}}
\newcommand{\JVer}{\textbf{JVer}}
\newcommand{\Inst}{\textbf{InstGen}}
\newcommand{\Enc}{\textbf{Encode}}
\newcommand{\LR}{\textbf{LR}}
\newcommand{\prms}{\textbf{prms}}
\newcommand{\puzzle}{\textbf{puzzle}}
\newcommand{\Obf}{\textbf{Obfuscate}}
\newcommand{\Evaluate}{\textbf{Evaluate}}
\newcommand{\U}{U_{\lambda}}
\newcommand{\pk}{\mathrm{PK}_{FHE}}
\newcommand{\sk}{\mathrm{SK}_{FHE}}
\newcommand{\p}{\mathrm{Program}}
\newcommand{\RND}{\mathcal{RND}}
\newcommand{\GARBLE}{\textbf{GARBLE}}
\title{
 Indistinguishability Obfuscation of Circuits and its Application in Security\\
}
\author{
  Shilun Li \\
  Dept. of Computer Science \\
  Stanford University \\
  \texttt{shilun@stanford.edu} \\
  \And
  Zijing Di \\
  Dept. of Computer Science \\
  Stanford University \\
  \texttt{zidi@stanford.edu}\\
}
\begin{document}
\maketitle
\begin{abstract}
Under discussion in the paper is an $\iO$ (indistinguishability obfuscator) for circuits in Nick's Class. The obfuscator is constructed by encoding the Branching Program given by Barrington’s theorem using Multilinear Jigsaw Puzzle framework. We will show under various indistinguishability hardness assumptions, the constructed obfuscator is an $\iO$ for Nick's Class. Using Fully Homomorphic Encryption, we will amplify the result and construct an $\iO$ for $\Ppoly$, which are circuits of polynomial size. Discussion on $\iO$ and Functional Encryption is also included in this paper. 
\end{abstract}

\section{Introduction}
In 2001, Barak et al.~\cite{cryptoeprint:2001/069} proposed the notion of indistinguishability obfuscation and opens an entire new area of research. $\iO$ is a type of software obfuscation which obfuscates any two programs which have the same input-output behavior to be indistinguishable from each other. It is a very strong object such that almost every existing cryptographic primitives can be constructed using $\iO$. However, current constructions of $\iO$ are expensive and impractical to use, and research in this area is still active. We are going to discuss obfuscation and indistinguishability obfuscation of $\NCo$ circuits based on \cite{garg2013candidate}. This construction makes use of the multilinear maps assumption which is not very sound, but we think it gives us an good introduction on $\iO$ construction and techniques such as bootstrapping is useful to give us a $\Ppoly$ obfuscation from any arbitrary $\NCo$ obfuscation. In this report, We will describe the indistinguishability obfuscation of $\NCo$ and explain how to use it and Fully Homomorphic Encryption to achieve indistinguishability obfuscation for all circuits. For readers who are interested in $\iO$, they may also read the most recent construction of $\iO$ from four well-founded assumptions~\cite{10.1145/3406325.3451093}.

\section{Preliminaries}
\subsection{Indistinguishability Obfuscator}
Let us first start by defining indistinguishability obfuscator $\iO$ for circuit classes.
\begin{definition}[Indistinguishability Obfuscator]
An \textbf{indistinguishability obfuscator} $\iO$ for a circuit class $\left\{\mathcal{C}_{\lambda}\right\}$ is a uniform PPT (probabilistic polynomial-time) algorithm satisfying:
\begin{itemize}
    \item \textbf{Completeness}: For any security parameters $\lambda \in \mathbb{N}$, $C \in \mathcal{C}_{\lambda}$, and inputs $x$, we have
    $$
    \operatorname{Pr}\left[C^{\prime}(x)=C(x): C^{\prime} \leftarrow i \mathcal{O}(\lambda, C)\right]=1.
    $$
    \item \textbf{Indistinguishability}: For all security parameters $\lambda \in \mathbb{N}$, for all pairs of circuits $C_{0}, C_{1} \in \mathcal{C}_{\lambda}$ such that $C_{0}(x)=C_{1}(x)$ for all inputs $x$, $\iO(C_0)$ and $\iO(C_1)$ are computationally indistinguishable. In other words, for any PPT distinguisher $D$, there exists a negligible function $\epsilon$ (a function that grows slower than $1/p$ for any polynomial $p$) such that:
    $$
    \left|\Pr\left[D\left(i \mathcal{O}\left(\lambda, C_{0}\right)\right)\right]-\Pr\left[D\left(i \mathcal{O}\left(\lambda, C_{1}\right)\right)\right]\right| \leq \epsilon(\lambda).
    $$
\end{itemize}
\end{definition}

We can apply this definition naturally to the circuit class $\NCo$ and $\Ppoly$.

\begin{definition}[Indistinguishability Obfuscator for $\NCo$]
A uniform PPT algorithm $\iO$ is an \textbf{indistinguishability obfuscator  for $\NCo$} if the following holds: for all constants $c\in\mathbb{N}$, let $\mathcal{C}_\lambda$ be the class of circuits of depth at most $c\log\lambda$ and size at most $\lambda$, then $\iO(c,\cdot, \cdot)$ is an indistinguishability obfuscator for the class $\{\mathcal{C}_\lambda\}$.
\end{definition}

\begin{definition}[Indistinguishability Obfuscator for $\Ppoly$]
Let $\mathcal{C}_\lambda$ be the class of circuits of size at most $\lambda$. An \textbf{indistinguishability obfuscator for $\Ppoly$} is an indistinguishability obfuscator for the class $\{\mathcal{C}_\lambda\}$.
\end{definition}

\subsection{Oblivious Transfer}
An oblivious transfer (OT) is a protocol which allows the receiver to choose some out of all of the sender's inputs. It requires that at the ends of the protocol, the sender knows nothing about which inputs the receiver chooses, and the receiver knows nothing other than the inputs it chooses. For instance, in a 1-out-of-$n$ OT, the sender has $n$ messages $m_1, \cdots, m_n$ and the receiver has a index $i \in [n]$. At the end of OT, the sender learns nothing about $i$ and the receiver learns nothing other than $m_i$. OT is crucial in the obfuscation of the branching program for Bob to learn matrices corresponding to his input $y$ privately. We will mostly use OT as a blackbox in our construction of $\iO$.

\subsection{Multilinear Map}
\begin{definition}[a multilinear map]
Let $G_1, G_2, \cdots, G_k$ and $G_T$ be groups of prime order $p$ where $g_i \in G_i$ for $i \in [k]$ is a generator. A multilinear map is a polynomial time function $e: G_1 \times G_2 \times \cdots \times G_k \rightarrow G_T$ such that it is
\begin{itemize}
    \item multilinear: for $x_i \in G_i$, and $\alpha_i \in \Zp$ where $i \in [k]$, we have
    $$e(x_1^{\alpha_1}, \cdots, x_k^{\alpha_k}) = e(x_1, \cdots, x_k)^{\prod_{i} \alpha_i}.$$
    \item non-degenerate: $g_T = e(g_1, \cdots, g_k)$ is a generator of $G_T$.
\end{itemize}
\end{definition}

\subsection{Witness Indistinguishable Proof}
Witness Indistinguishable Proof (WIP) is a proof system for languages in NP. In WIP, the prover who has the witness will try to convince the verifier that $x \in L$ by sending verifier a proof. The verifier should learn nothing about the witness other than the fact that it exists. The "learn nothing" notion is framed as verifier having difficulty distinguishing different witnesses. A non-interactive WIP is WIP with no interaction between the prover and verifier, and it satisfies perfect soundness if it is impossible for the verifier to accept a proof when $x \notin L$.

\section{Multilinear Jigsaw Puzzles}
In this section, we will introduce a variant of multilinear maps called \textbf{Multilinear Jigsaw Puzzles}. They are similar to the GGH multilinear encoding schemes in \cite{garg2013candidate} except that only the party that generated the system parameters is able to encode elements in the Multilinear Jigsaw Puzzles scheme. There are two entities in the Multilinear Jigsaw Puzzle scheme: the \textbf{Jigsaw Generator}, and the \textbf{Jigsaw Verifier}. The Jigsaw Generator takes as input a description of the ``plaintext elements" and encodes the plaintext into jigsaw puzzle pieces. The name jigsaw puzzle pieces comes from the fact that they can only be meaningfully combined in very restricted ways, like a jigsaw puzzle. The Jigsaw Verifier takes as input the jigsaw puzzle pieces and a specific Multilinear Form for combining these pieces. The Jigsaw Verifier outputs 1 if jigsaw puzzle pieces and the Multilinear Form are ``successfully arranged" together.

While we hope to specify the input ``plaintext elements" to the Jigsaw Generator, but in our setting it is the generator itself that choose the plaintext space $\Zp$. So instead, we will use a \textbf{Jigsaw Specifier} which takes $p$ as input and outputs the ``plaintext elements" in $\Zp$ that the generator should encode and ``encoding levels" which are subsets of $[k]$ specifying the relative level of encoding.

\begin{definition}[Jigsaw Specifier]
A \textbf{Jigsaw Specifier} is a tuple $(k, \ell, A)$ where $k, \ell \in \mathbb{Z}^{+}$are parameters, and $A$ is a probabilistic circuit with the following behavior: On input a prime $p$, $A$ outputs the prime $p$ and an ordered set of $\ell$ pairs $\left(S_{1}, a_{1}\right),\left(S_{2}, a_{2}\right), \ldots,\left(S_{\ell}, a_{\ell}\right)$, where each $a_{i} \in \mathbb{Z}_{p}$ and each $S_{i} \subseteq[k]$.
\end{definition}

Now let us deine \textbf{Multilinear Forms}, which can be evaluated on the output of the Jigsaw Specifier. Multilinear Forms corresponds to arithmetic circuits with gates such as addition, negation, and subtraction as well as ignore gates.

\begin{definition}[Multilinear Form]
A \textbf{Multilinear Form} is a tuple $\mathcal{F}=(k, \ell, \Pi, F)$, with  parameters $k, \ell \in \mathbb{Z}^{+}$. $\Pi$ is a circuit with $\ell$ input wires, consisting of binary addition $\oplus$ gates, binary multiplication $\otimes$ gates, unary negation $\ominus$ gates, and unary ``ignore" $\square$ gates. $F$ is an assignment of an index set $I \subseteq[k]$ to every wire of $\Pi$. A Multilinear Form must satisfy the following constraints:
\begin{itemize}
    \item For every $\oplus$-gate or $\ominus$-gate, all the inputs and outputs of that gate are assigned the same set $I$.
    \item  For every $\otimes$-gate, its two inputs are assigned disjoint sets $I_{1}, I_{2} \subseteq[k]$, and its outputs are assigned the union set $I_{1} \cup I_{2}$.
    \item  The out-degree of all $\square$-gates is zero.
    \item  The output wire is assigned the set $[k]$.
\end{itemize}
Moreover, we call a Multilinear Form $(k, \ell, \Pi, F)$ \textbf{$\alpha$-bounded} if the size of the circuit $\Pi$ is at most $\alpha$.
\end{definition}

The first two constraints of the Multilinear Form is what makes it ``multilinear", in the sense that any index of $[k]$ can be added only once on any path from input to output. The following figure illustrates a multilinear form with input size $\ell=7$ and index set size $k=8$.
\begin{figure}[H]
    \centering
    \includegraphics[width=10cm]{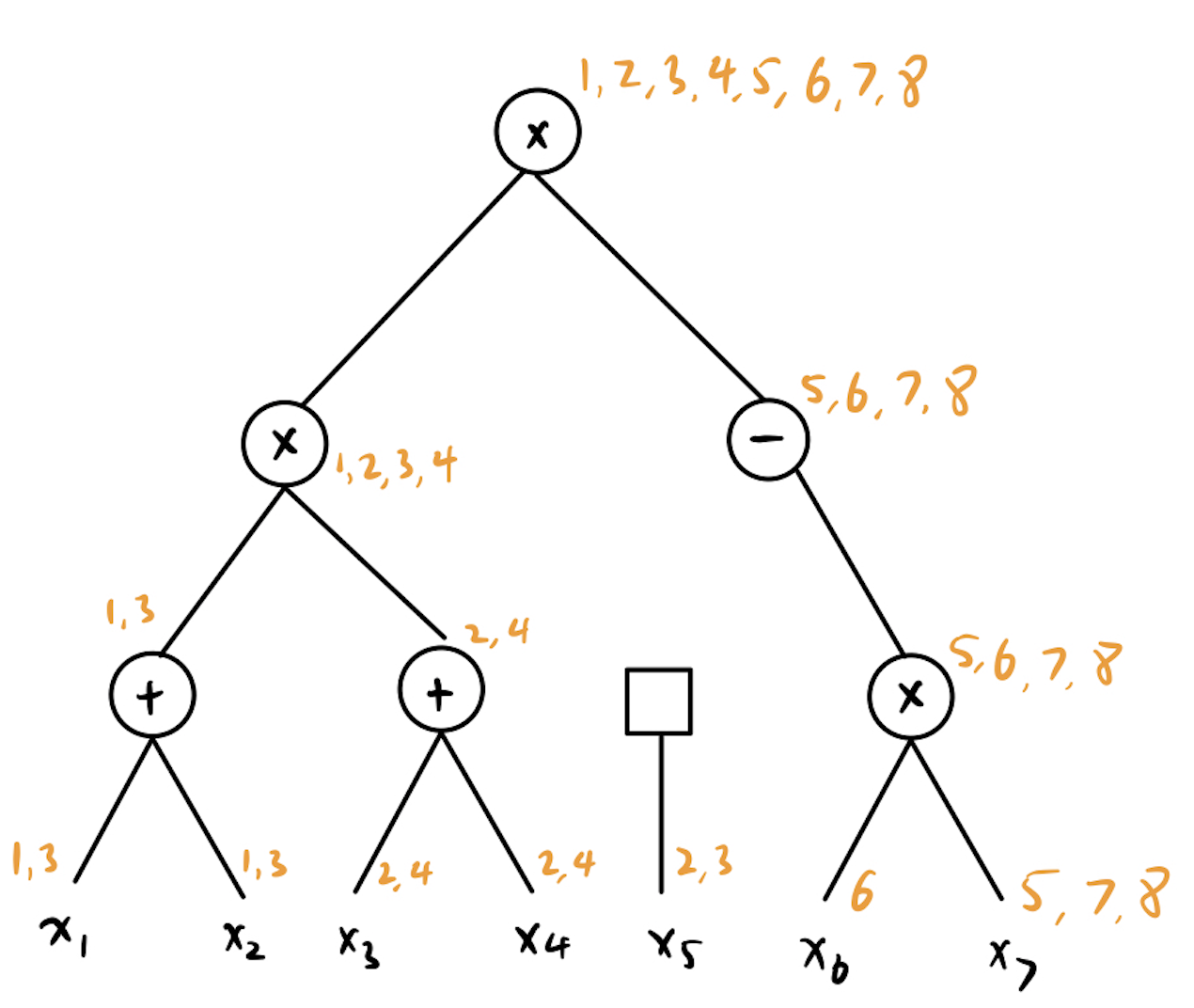}
    \caption{Example Multilinear Form}
    \label{fig: multilinear form}
\end{figure}

Now let us define what it means to ``evaluate" a Multilinear Form on the output of a Jigsaw Specifier.

\begin{definition}[Multilinear Evaluation]
For a Jigsaw Specifier with input $(k, \ell, A)$, let $X=\left(p,\left(S_{1}, a_{1}\right),\left(S_{2}, a_{2}\right), \ldots,\left(S_{\ell}, a_{\ell}\right)\right)$ denote its output, where $a_{i} \in \mathbb{Z}_{p}$  and $S_{i} \subseteq[k]$ for all $i$. We say that a Multilinear Form $\mathcal{F}=\left(k^{\prime}, \ell^{\prime}, \Pi, F\right)$ is \textbf{compatible with} $X$ if $k=k^{\prime}, \ell=\ell^{\prime}$, and the input wires of $\Pi$ are assigned the sets $S_{1}, S_{2}, \ldots, S_{\ell}$.

If the Multilinear Form $\mathcal{F}$ is compatible with $X$ then the \textbf{evaluation} $\mathcal{F}(X)$ is the output of the circuit $\Pi$ on the input $\left(S_{1}, a_{1}\right),\left(S_{2}, a_{2}\right), \ldots,\left(S_{\ell}, a_{\ell}\right)$ where the behavior of the gates are defined as follows (arithmetic operations are over $\mathbb{Z}_{p}$):
\begin{itemize}
    \item For every $\ominus$ gate, we have $\ominus(S, a)=(S,-a)$.
    \item For every $\oplus$ gate, we have $\left(S, a_{1}\right)\oplus\left(S, a_{2}\right)=\left(S, a_{1}+a_{2}\right)$.
    \item For every $\otimes$ gate, we have $\left(S_{1}, a_{1}\right)\otimes\left(S_{2}, a_{2}\right)=\left(S_{1} \cup S_{2}, a_{1} \cdot a_{2}\right)$.
\end{itemize}
We say that the multilinear evaluation $\mathcal{F}(X)$ \textbf{succeeds} if $\mathcal{F}(X)=([k], 0)$.
\end{definition}

Now we can formally define Multilinear Jigsaw Puzzles:

\begin{definition}[Multilinear Jigsaw Puzzle scheme]
A \textbf{Multilinear Jigsaw Puzzle scheme} $\mathcal{MJP}$ consists of two PPT algorithms $(\JGen, \JVer)$ defined as follows:

\textbf{Jigsaw Generator}: The generator $\JGen=(\Inst,\Enc)$ is specified via a pair of PPT algorithms:
\begin{itemize}
    \item $\Inst$ is the \textbf{randomized instance-generator} which takes as input the security parameter $1^{\lambda}$ and the multilinearity parameter $k$, and outputs a prime $p\leq 2^{\lambda}$, public system parameters $\prms$, and a secret state $s$ to pass to the encoding algorithm, $(p, \prms, s) \leftarrow \Inst\left(1^{\lambda}, 1^{k}\right)$.
    \item The (possibly randomized) \textbf{encoding algorithm} takes as input the prime $p$, the public parameters $\prms$ and secret state $s$, and a pair $(S, a)$ with $S \subseteq[k]$ and $a \in \mathbb{Z}_{p}$, and outputs an encoding of $a$ relative to $S$. We denote this encoding by $(S, u) \leftarrow \Enc(p, \prms, s, S, a)$.
\end{itemize}

The Jigsaw Generator is given a Jigsaw Specifier $(\ell, k, A)$ and the security parameter $\lambda$, it first runs the instance-generation to get $(p, \prms, s) \leftarrow \Inst\left(1^{\lambda}, 1^{k}\right)$, then runs the Jigsaw Specifier on input $p$ to get $X=\left(p,\left(S_{1}, a_{1}\right), \ldots,\left(S_{\ell}, a_{\ell}\right)\right) \leftarrow A(p)$, and finally encodes all the plaintext elements by running $\left(S_{i}, u_{i}\right) \leftarrow \Enc\left(\prms, s, S_{i}, a_{i}\right)$ for all $i=1, \ldots, \ell$.
The public output of the Jigsaw Generator, which we denote by $\puzzle=\left(\prms, \left(S_{1}, u_{1}\right), \ldots,\left(S_{\ell}, u_{\ell}\right)\right)$, consists of the parameters $\prms$, and also all the encodings $\left(S_{i}, u_{i}\right)$. For notational convenience, we denote the ``extended" output of $\JGen$ as
$$
(p, X, \puzzle) \leftarrow \JGen\left(1^{\lambda}, k, \ell, A\right).
$$
We call $\puzzle$ the public output and $X$ the private output. 

\textbf{Jigsaw Verifier}: The verifier $\JVer$ is a PPT algorithm that takes as input the public output $\puzzle$ of a Jigsaw Generator, and a Multilinear Form $\mathcal{F}=(k, \ell, \Pi, F)$. It outputs either 1 for accept or 0 for reject.

For a particular generator output $(p, X, \puzzle)$ and a form $\mathcal{F}$ compatible with $X$, we say that the verifier $\JVer$ is correct relative to $(p, \puzzle, \mathcal{F}, X)$ if 
$$\mathcal{F}(X)=([k], 0)\Longleftrightarrow \JVer(\puzzle, \mathcal{F})=1$$
Otherwise $\JVer$ is incorrect relative to $(p, \puzzle, \mathcal{F}, X)$.

We require that with high probability over the randomness of the generator, the verifier will be correct \textbf{on all forms}. Specifically, if $\JVer$ is deterministic then we require that for any polynomial $\alpha$ and $\alpha$-bounded Jigsaw Specifier family $\left\{\left(k_{\lambda}, \ell_{\lambda}, A_{\lambda}\right)\right\}_{\lambda \in \mathbb{Z}^{+}}$, the probability of $\JVer$ being incorrect relative to $(p, \puzzle, \mathcal{F}, X)$ for some $\alpha$-bounded form $\mathcal{F}$ compatible with $X$ is negligible, i.e.
$$
\Pr\left[(p, X, \puzzle) \leftarrow \JGen\left(1^{\lambda}, k_{\lambda}, \ell_{\lambda}, A_{\lambda}\right)\right] \leq \epsilon(\lambda) .
$$
for some negligible function $\epsilon$.
\end{definition}
The correctness follows directly from the definition of the Multilinear Jigsaw Puzzle framework. On the other hand, from the aspect of security, intuitively we want it to be the case that two different Jigsaw Puzzles $\puzzle_A$ and $\puzzle_{A'}$ are distinguishable \textbf{if and only if} there is a multilinear form $\mathcal{F}$ that succeeds with noticeably different probabilities on $A$ vs. $A'$. Thus, we will consider distributions over puzzles $\puzzle_A$ and $\puzzle_{A'}$ which are not distinguishable via multilinear forms. Under our assumptions, these puzzles are then computationally indistinguishable from each other.

Formally, the hardness assumption in the Multilinear Jigsaw Puzzle framework states that the public output of the Jigsaw Generator on two different polynomial-size families of Jigsaw Specifiers are computationally indistinguishable.



\section{Branching Program}
Our branching program is based on the one defined in the Barrington's theorem~\cite{barr1986branching}, called ``oblivious linear branching programs''. The theorem states that all $\log$-depth circuits ($\NCo$) have a corresponding poly-length branching program. Or in general, any depth-$d$ circuits have an equivalent branching program of length at most $4^d$. 
\begin{theorem}[\cite{barr1986branching}]
There exists two distinct 5-cycle permutation matrices $A_0, A_1 \in \zo^{5 \times 5}$ such that for any depth-d fan-in-2 Boolean circuit $C(\cdot)$, there exists an $(A_0, A_1)$ oblivious linear branching program of length at most $4^d$ that computes the same function as the circuit $C$.
\end{theorem}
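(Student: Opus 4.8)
The plan is to follow Barrington's original inductive argument, recast in the language of products in the symmetric group $S_5$. Fix once and for all a $5$-cycle $\sigma\in S_5$, and say that an oblivious linear branching program $P$ on inputs in $\zo^\ell$ \emph{$\sigma$-computes} a Boolean function $f$ if, on every input $x$, the ordered product of the $5\times5$ permutation matrices selected by $P$ equals the identity $I$ when $f(x)=0$ and equals $\sigma$ when $f(x)=1$. The theorem follows once we show that every depth-$d$ fan-in-$2$ circuit is $\sigma$-computed by some oblivious linear program of length at most $4^d$: conjugating every matrix of a program by a fixed permutation preserves both its length and its obliviousness while sending the target pair $(I,\sigma)$ to $(I,\pi\sigma\pi^{-1})$, and composing the program with one extra fixed permutation (folded into the last instruction) lets us normalize the two output values to be exactly the two prescribed distinct $5$-cycles $A_0,A_1$ of the statement — e.g. one may arrange the ``reject'' value to be a fixed $5$-cycle $\tau$ and the ``accept'' value to be a distinct power of $\tau$.

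First I would record three closure properties. (i) \emph{Conjugation}: if $P$ $\sigma$-computes $f$, conjugating each matrix by $\pi$ gives a program of the same length that $(\pi\sigma\pi^{-1})$-computes $f$; since all $5$-cycles are conjugate in $S_5$, this lets us freely switch the target $5$-cycle. (ii) \emph{Negation}: right-multiplying the product by $\sigma^{-1}$ (absorbed into the last instruction, so the length is unchanged) turns a $\sigma$-program for $f$ into a $\sigma^{-1}$-program for $\neg f$, which by (i) becomes a $\sigma$-program for $\neg f$. (iii) \emph{Conjunction}: given $P_1$ that $\sigma_1$-computes $f_1$ and $P_2$ that $\sigma_2$-computes $f_2$, form the commutator program $P_1\,P_2\,P_1^{-1}\,P_2^{-1}$, where $P^{-1}$ is $P$ with its instruction sequence reversed and every matrix inverted (again an oblivious linear program of the same length). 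The resulting product is $I$ unless $f_1=f_2=1$, in which case it is $\sigma_1\sigma_2\sigma_1^{-1}\sigma_2^{-1}$; choosing $\sigma_1,\sigma_2$ to be two specific $5$-cycles whose commutator is again a $5$-cycle (such a pair exists in $S_5$), and then applying (i), we get a $\sigma$-program for $f_1\wedge f_2$ of length $2(|P_1|+|P_2|)$.

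Then I would run the induction on circuit depth $d$. Base case $d=0$: the circuit is a literal $x_i$ (or a constant); the length-one program with the single instruction $(i,I,\sigma)$ $\sigma$-computes $x_i$ and has length $1=4^0$, and constants are handled by the empty program together with (ii). Inductive step, assuming the claim for depth $d-1$: if the top gate is $\neg$, apply (ii), which does not increase length, so $4^{d-1}\le 4^d$; if the top gate is $\wedge$ with subcircuits of depth $\le d-1$, apply (iii) to get length $\le 2(4^{d-1}+4^{d-1})=4^d$; if the top gate is $\vee$, rewrite it as $\neg(\neg f_1\wedge\neg f_2)$ by De Morgan, so its cost matches the $\wedge$ case since the negations are free. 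This closes the induction and produces a width-$5$ oblivious linear branching program of length at most $4^d$, and the final normalization of (i) fixes its two output matrices to the asserted $A_0,A_1$.

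The step I expect to be the crux is closure under conjunction: exhibiting a pair of $5$-cycles in $S_5$ whose commutator is a $5$-cycle — precisely where the non-solvability of $S_5$ enters, and what forces width $5$ — together with the bookkeeping that the reverse-and-invert construction $P^{-1}$ keeps the program both ``oblivious'' (the queried-variable sequence is still input-independent, just reversed) and ``linear'' (inverses of permutation matrices are permutation matrices). Everything else — the base case, negation, De Morgan, and the arithmetic $2(4^{d-1}+4^{d-1})=4^d$ — is routine once these pieces are in place.
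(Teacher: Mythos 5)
Your proposal is correct: it is essentially Barrington's original argument (conjugation to move between $5$-cycles, free negation by absorbing $\sigma^{-1}$ into the last instruction, AND via the commutator program $P_1P_2P_1^{-1}P_2^{-1}$ using two $5$-cycles whose commutator is again a $5$-cycle, giving the $4^d$ length bound by induction), which is exactly the proof of the cited result — the paper itself states this theorem by citation to Barrington and gives no proof of its own. The only detail a full write-up should pin down is the explicit pair of $5$-cycles with $5$-cycle commutator (e.g. $(1\,2\,3\,4\,5)$ and $(1\,3\,5\,4\,2)$), and the final normalization you sketch (reject value $\tau$, accept value $\tau^{2}$) does correctly yield two fixed distinct $5$-cycle matrices $A_0,A_1$ as the statement demands.
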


As a result, if we can obfuscate a poly-length matrix branching program, then we prove that there exists a $\iO$ for $\NCo$.

The branching program computes a function $f$ using permutations in $S_5$ encoded by permutation matrices. In particular, multiplying permutations is equivalent to multiply matrices, each permutation corresponds to a matrix in $\zo^{5 \times 5}$ and the identity permutation corresponds to matrix $I$. To compute the functions, we are given two distinct matrices $A_0, A_1$ and break the computation into $i$ steps. In each step, for each bit $b$ of the input $x$, we multiply the matrix $A_{i, b} \in \zo^{5 \times 5}$. In the end, if the product of all matries is $A_0$, then $f(x) = 0$, else if the product is $A_1$, then $f(x) = 1$. In our later construction of $\iO$, we will let $A_0 = I$ and hence $f(x) = 0$ if the product of all matrices is the identity matrix. 
\begin{definition}[Oblivious Linear Branching Program]
Given two distinct permutation matrices $A_0, A_1 \in \zo^{5 \times 5}$, an $(A_0, A_1)$ oblivious branching program of length-$n$ for a $l$-bit input is the sequence
$$BP = ((\textbf{inp}(i), A_{i,0}, A_{i,1}))^n_{i=1},$$
where $A_{i,0} \in \zo^{5 \times 5}$ is the permutation matrix corresponds to bit 0 of $\textbf{inp}(i)$-th bit of $x$ in step $i$, $A_{i,1} \in \zo^{5 \times 5}$ corresponds to bit 1 of $\textbf{inp}(i)$-th bit of $x$. The function $f$ is computed as 
\[ f_{BP, A_O, A_1}(x) = \begin{cases} 
      0 & \prod_{i=1}^n A_{i, x_{\textbf{inp}(i)}} = A_0 \\
      1 & \prod_{i=1}^n A_{i, x_{\textbf{inp}(i)}} = A_1 \\
      \textbf{undef} & \text{otherwise}
   \end{cases}
\]
\end{definition}
For a concrete example, suppose $f(x_1x_2) = x_1 \oplus x_2$. Then we can define 
\[A_0 = I,
A_1 = \begin{pmatrix}
0 & 1\\
1 & 0
\end{pmatrix},
A_{1,0} = \begin{pmatrix}
1 & 0\\
0 & 1
\end{pmatrix},
A_{1,1} = \begin{pmatrix}
0 & 1\\
1 & 0
\end{pmatrix},
A_{2,0} = \begin{pmatrix}
1 & 0\\
0 & 1
\end{pmatrix},
A_{2,1} = \begin{pmatrix}
0 & 1\\
1 & 0
\end{pmatrix}.\]
Indeed, if $x=00$, then at step 1, we multiply by $A_{1,0}$ and at step 2 multiply by $A_{2,0}$. $f$ correctly outputs 0 since $A_{1,0}A_{2,0} = I$. Other three cases also easily follow.

\section{\texorpdfstring{$\NCo$}{TEXT} Obfuscation}
In this section, We will describe the indistinguishability obfuscation candidate for $\NCo$ which uses branching problems and Multilinear Jigsaw Puzzles. Let us denote a length $n$ oblivious branching program over $\ell$ inputs by
$$
B P=\left\{\left(\textbf{inp}(i), A_{i, 0}, A_{i, 1}\right): i \in[n], \textbf{inp}(i) \in[\ell], A_{i, b} \in\{0,1\}^{5 \times 5}\right\}.
$$
\subsection{Randomized Branching Program}
Let us randomize the branching program over some ring $\Zp$. Let $m=2n+5$ and the randomized branching program can be described as follows:
\begin{itemize}
\item Randomly sample $\left\{\alpha_{i, 0}, \alpha_{i, 1}, \alpha_{i, 0}^{\prime}, \alpha_{i, 1}^{\prime}: i \in[n]\right\}$ from $\Zp$, subject to the constraint that $\prod_{i \in I_{j}} \alpha_{i, 0}=\prod_{i \in I_{j}} \alpha_{i, 0}^{\prime}$ and $\prod_{i \in I_{j}} \alpha_{i, 1}=\prod_{i \in I_{j}} \alpha_{i, 1}^{\prime}$ for all $j \in[\ell]$.

\item For every $i \in[n]$, compute four $(2 m+5) \times(2 m+5)$ block-diagonal matrices $D_{i, 0}, D_{i, 1}, D_{i, 0}^{\prime}, D_{i, 1}^{\prime}$ where the diagonal entries $1, \ldots, 2 m$ are chosen at random. The bottom-right $5 \times 5$ block of $D_{i,b}$ is given by $\alpha_{i,b} A_{j, b}$ and the bottom-right $5 \times 5$ block of $D_{i,b}$ is given by $\alpha_{i,b}^\prime I_{5}$, for $b=0,1$.

\item Choose vectors $\mathbf{s}$ and $\mathbf{t}$, and $\mathbf{s}^{\prime}$ and $\mathbf{t}^{\prime}$ of dimension $2 m+5$ as follows:
\begin{align*}
    &\mathbf{s}=(0_m, v_m, \mathbf{s}^*)  &\mathbf{s}^\prime=(0_m, v_m^\prime, \mathbf{s}^{\prime*})\\
    &\mathbf{t}=(0_m, w_m, \mathbf{w}^*)^T &\mathbf{t}^\prime =(0_m, w_m^\prime, \mathbf{t}^{\prime*})^T
\end{align*}
where $v_m,v^\prime_m,w_m,w^\prime_m$ are random vectors in $\Zp^m$ and $\mathbf{s}^*, \mathbf{s}^{\prime*}, \mathbf{t}^*, \mathbf{t}^{\prime*}$ are are random vectors in $\Zp^5$ subject to the constraint that $\langle \mathbf{s}^*, \mathbf{t}^*\rangle=\langle \mathbf{s}^{\prime*}, \mathbf{t}^{\prime*}\rangle$.

\item Sample random full-rank $(2 m+5) \times(2 m+5)$ matrices $R_{0}, R_{1}, \ldots, R_{n}$ and $R_{0}^{\prime}, R_{1}^{\prime}, \ldots, R_{n}^{\prime}$ over $\Zp$ and compute their inverses.

\item The randomized branching program over $\mathbb{Z}_{p}$ is the following:
$$
\begin{aligned}
&\RND_{p}(B P)= \\
&\left\{\begin{array}{ll}
\tilde{\mathbf{s}}=\mathbf{s} R_{0}^{-1}, \tilde{\mathbf{t}}=R_{n} \mathbf{t}, & \tilde{\mathbf{s}}^{\prime}=\mathbf{s}^{\prime}\left(R_{0}^{\prime}\right)^{-1}, \tilde{\mathbf{t}}^{\prime}=R_{n}^{\prime} \mathbf{t}^{\prime} \\
\left\{\tilde{D}_{i, b}=R_{i-1} D_{i, b} R_{i}^{-1}: i \in[n], b \in\{0,1\}\right\}, & \left\{\tilde{D}_{i, b}^{\prime}=R_{i-1}^{\prime} D_{i, b}^{\prime}\left(R_{i}^{\prime}\right)^{-1}: i \in[n], b \in\{0,1\}\right\}
\end{array}\right\}
\end{aligned}
$$
\end{itemize}
The randomized branching program runs the original branching program $BP$ and a ``dummy program" consisting only of identity matrices at the same time. We only use the dummy program for the purpose of equality test, i.e. the original program outputs 1 only when it agrees with the dummy program.

\subsection{Garbled Branching Program}
Now using Multilinear Jigsaw Puzzles, we can use the Jigsaw Specifier that on input $p$ randomizes the branching program over $\Zp$ and outputs $\RND_p(BP)$. Then we use the encoding part of the Jigsaw generator to encode each element of the step-$i$ matrices relative to $\{i+1\}$, each element of the vectors $\tilde{\mathbf{s}}, \tilde{\mathbf{s}}^{\prime}$ relative to $\{1\}$, and each element of the vectors $\tilde{\mathbf{t}}, \tilde{\mathbf{t}}^{\prime}$ relative to $\{n+2\}$. We denote the public output of the Jigsaw generator, which is call the randomized and encoded program, by
$$
\begin{aligned}
&\widehat{\RND}_{p}(B P)= \\
&\left\{\begin{array}{ll}
\prms, \quad \hat{\mathbf{s}}=\Enc_{\{1\}}(\tilde{\mathbf{s}}), \hat{\mathbf{t}}=\Enc_{\{n+2\}}(\tilde{\mathbf{t}}), & \hat{\mathbf{s}}^{\prime}=\Enc_{\{1\}}\left(\tilde{\mathbf{s}}^{\prime}\right), \hat{\mathbf{t}}^{\prime}=\Enc_{\{n+2\}}\left(\tilde{\mathbf{t}}^{\prime}\right) \\
\left\{\hat{D}_{i, b}=\Enc_{\{i+1\}}\left(\tilde{D}_{i, b}\right): i \in[n], b \in\{0,1\}\right\}, & \left\{\hat{D}_{i, b}^{\prime}=\Enc_{\{i+1\}}\left(\tilde{D}_{i, b}^{\prime}\right): i \in[n], b \in\{0,1\}\right\}
\end{array}\right\}
\end{aligned}
$$
On the other hand, the private output of the Jigsaw generator is $(p,\RND_p(BP))$. 

For every input $\chi\in\{0,1\}^{\ell}$ to $BP$, let us consider the Multilinear Form $\mathcal{F}_{\chi}$ given by
$$
\mathcal{F}_{\chi}\left(\RND_{p}(B P)\right)=\tilde{\mathbf{s}}\left(\prod_{i} \tilde{D}_{i, \chi_{\textbf{inp}(i)}}\right) \tilde{\mathbf{t}}-\tilde{\mathbf{s}}^{\prime}\left(\prod_{i} \tilde{D}_{i, \chi_{\textbf{inp}(i)}^{\prime}}\right) \tilde{\mathbf{t}}^{\prime} \bmod p
$$
Note that 
\begin{align*}
    BP(\chi)&=0\quad\Rightarrow \quad \Pr \left[\mathcal{F}_{\chi}\left(\RND_{p}(BP)\right)=0\right]=1,\\
    BP(\chi)&=1\quad\Rightarrow \quad \Pr\left[\mathcal{F}_{\chi}\left(\RND_{p}(BP)\right)=0\right]=1/p.
\end{align*}
So given the public output $\widehat{\RND}_{p}(B P)$ of the generator and the original input $\chi$, we can use the Jigsaw verifier to check if $\mathcal{F}_{\chi}\left(\RND_{p}(B P)\right)=0$ and learn the output of $B P(\chi)$ with high probability.

Roughly speaking, we want it to be the case where if for two different ways of fixing inputs to the branching program result in the same functionality, then it is infeasible to decide which of the two sets of fixed inputs is used in a given garbled program.

Given $\widehat{\RND}_{p}(B P)$ and a partial assignment for the input bits, $\sigma: J \rightarrow\{0,1\}$ for some $J \subset[\ell]$, the Parameter-fixing procedure removes all the matrices $\tilde{D}_{i, b}, \tilde{D}_{i, b}^{\prime}$ that are not consistent with that partial assignment $\sigma$. So we have
\begin{align*}
&\GARBLE\left(\widehat{\mathcal{R N D}}_{p}(B P),(J, \sigma)\right)=\\
&\left\{\begin{array}{ll}
\prms, \hat{\mathbf{s}}, \hat{\mathbf{t}}, & \hat{\mathbf{s}}^{\prime}, \hat{\mathbf{t}}^{\prime} \\
\left\{\hat{D}_{i, b}: i \in I_{J}, b=\sigma(\textbf{inp}(i))\right\}, & \left\{\hat{D}_{i, b}^{\prime}: i \in I_{J}, b=\sigma(\textbf{inp}(i))\right\} \\
\left\{\hat{D}_{i, b}: i \notin I_{J}, b \in\{0,1\}\right\}, & \left\{\hat{D}_{i, b}^{\prime}: i \notin I_{J}, b \in\{0,1\}\right\}
\end{array}\right\}
\end{align*}
The garbled program can be thought as input fixing. If the underlying program is computing a function $F$ then the garbled program computes $\left.F\right|_{\sigma}$.
\begin{definition}[Functionally Equivalent Assignments]
Fix a function $F:\{0,1\}^{\ell} \rightarrow\{0,1\}$. Two partial assignments $\left(J, \sigma_{0}\right)$, $\left(J, \sigma_{1}\right)$ over the same input variables are \textbf{functionally equivalent} relative to $F$ if $\left.F\right|_{\sigma_{0}}=\left.F\right|_{\sigma_{1}}$.
\end{definition}
\begin{assumption}[Equivalent Program Indistinguishability]
\label{assump: equivalent program}
For any length $\ell$ branching program $BP$ computing a function $F$, and any two functionally equivalent partial assignments relative to $F$, $\left(J, \sigma_{0}\right)$ and $\left(J, \sigma_{1}\right)$, the corresponding garbled programs are computationally indistinguishable:
$$
\GARBLE\left(\widehat{\RND}(B P),\left(J, \sigma_{0}\right)\right) \approx \GARBLE\left(\widehat{\RND}(B P),\left(J, \sigma_{1}\right)\right).
$$
\end{assumption}
\subsection{Candidate \texorpdfstring{$\NCo$}{TEXT} Obfuscator}
Our candidate $\NCo$ obfuscator will be of the form $\GARBLE\left(\widehat{\RND}(B P),\left(J, \sigma_{1}\right)\right)$:
\begin{theorem}
Under the assumptions given by \Cref{assump: equivalent program}, there exists an efficient $i \mathcal{O}$ for $\NCo$ circuits.
\end{theorem}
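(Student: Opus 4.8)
The plan is to take the obfuscator to be exactly the pipeline assembled in this section --- Barrington $\to$ randomise $\to$ Jigsaw-encode $\to$ garble with no input bit fixed --- and then to check its two requirements: \emph{completeness}, from the correctness facts already stated for $\RND_p$ and for $\JVer$, and \emph{indistinguishability}, by reduction to \Cref{assump: equivalent program}.

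First I would pin down the obfuscator. On input $(c,\lambda,C)$ with $C$ of depth at most $c\log\lambda$ and size at most $\lambda$, apply Barrington's theorem to get an oblivious linear branching program $BP$ computing the same function as $C$, of length $n\le 4^{c\log\lambda}=\lambda^{2c}$; normalise it so that $A_0=I$ by appending one fixed step, and pad it with identity steps so that $n$ and the index function $\textbf{inp}$ depend only on $(c,\lambda)$. Define
$$\iO(c,\lambda,C):=\GARBLE\bigl(\widehat{\RND}(BP),(\emptyset,\bot)\bigr),$$
published together with the evaluation rule ``on input $\chi$, form $\mathcal{F}_\chi$ and output $0$ if $\JVer(\widehat{\RND}(BP),\mathcal{F}_\chi)=1$, else $1$''. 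Efficiency is immediate: $n$, $m=2n+5$, all matrix dimensions, the number of $\Zp$-elements sampled and matrices inverted in $\RND_p$, and the multilinearity $k=n+2$ are polynomial, and $\JGen$ is PPT. For completeness fix $\chi$: since $BP$ computes $C$, $BP(\chi)=C(\chi)$; by the displayed property of $\RND_p$, $\mathcal{F}_\chi$ evaluates to $([k],0)$ exactly when $BP(\chi)=0$, with a one-sided error $1/p\le 2^{-\lambda}$ when $BP(\chi)=1$; and by the stated guarantee $\JVer$ is correct on all compatible forms except with negligible probability over $\JGen$'s coins. Hence the evaluated value equals $C(\chi)$ except with negligible probability, which is the completeness clause up to the negligible slack inherent in this candidate.

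For indistinguishability, let $C_0,C_1\in\mathcal{C}_\lambda$ with $C_0(x)=C_1(x)=:F(x)$ for all $x$. The strategy is to make the two obfuscations appear as two \emph{functionally equivalent partial assignments of a single branching program} and then invoke \Cref{assump: equivalent program} directly. To that end, consider the $\NCo$ circuit $G(s,x)=(\bar s\wedge C_0(x))\vee(s\wedge C_1(x))$ with a fresh selector bit $s$: since $C_0\equiv C_1\equiv F$ we have $G(0,x)=G(1,x)=F(x)$, so the partial assignments $(\{s\},s\mapsto 0)$ and $(\{s\},s\mapsto 1)$ are functionally equivalent relative to $G$. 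Let $BP^{\ast}=\mathrm{Barrington}(G)$. Then \Cref{assump: equivalent program} gives
$$\GARBLE\bigl(\widehat{\RND}(BP^{\ast}),(\{s\},s\mapsto 0)\bigr)\ \approx\ \GARBLE\bigl(\widehat{\RND}(BP^{\ast}),(\{s\},s\mapsto 1)\bigr),$$
so it remains to argue that $\GARBLE(\widehat{\RND}(BP^{\ast}),(\{s\},s\mapsto b))$ is (statistically, hence computationally) close to $\iO(c,\lambda,C_b)$: once the parameter-fixing procedure deletes every matrix inconsistent with $s=b$, the surviving encoded matrices form a randomised, encoded, oblivious linear branching program of the canonical padded shape that computes $C_b$, and one matches its distribution to $\widehat{\RND}(BP_b)$ by checking that the conjugations by the full-rank $R_i$, the block-diagonal randomisers $D_{i,b}$, the multipliers $\alpha_{i,b},\alpha_{i,b}^{\prime}$ constrained only by $\prod_{i\in I_j}\alpha_{i,b}=\prod_{i\in I_j}\alpha_{i,b}^{\prime}$, and the bookend vectors with $\langle\mathbf{s}^*,\mathbf{t}^*\rangle=\langle\mathbf{s}^{\prime*},\mathbf{t}^{\prime*}\rangle$ absorb the residual structural differences. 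Chaining the three relations yields $\iO(c,\lambda,C_0)\approx\iO(c,\lambda,C_1)$, and since $BP^{\ast}$ is constructed from $C_0,C_1$ in polynomial time any $\iO$-distinguisher becomes a distinguisher against \Cref{assump: equivalent program} of essentially the same advantage.

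I expect the main obstacle to be precisely this last matching step --- showing rigorously (not merely by intuition about ``randomisation washing things out'') that restricting the selector in $BP^{\ast}$ reproduces the obfuscator's output distribution on $C_b$. Doing this cleanly forces a careful layout of $BP^{\ast}$ so that $BP^{\ast}|_{s=b}$ is literally a branching program for $C_b$ of the canonical padded form, together with an honest computation of the marginal distributions of $\RND_p$ in the two experiments. If a single selector does not give an exact match, the fallback is to interpolate between $BP_0$ and $BP_1$ by a polynomial-length chain of branching programs, consecutive ones related by a single application of \Cref{assump: equivalent program} through a local selector, and to conclude by a hybrid over this chain at only polynomial cost. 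Everything else --- the assembly of the pipeline, efficiency, and the completeness check --- is routine given the results already established above.
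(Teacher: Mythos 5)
There is a genuine gap, and it sits exactly where you flagged ``the main obstacle'': your reduction needs the claim that $\GARBLE\bigl(\widehat{\RND}(BP^{\ast}),(\{s\},s\mapsto b)\bigr)$ is statistically close to your obfuscator's output $\GARBLE\bigl(\widehat{\RND}(BP_b),(\emptyset,\bot)\bigr)$, where $BP_b=\mathrm{Barrington}(C_b)$. That claim is unsupported and, as far as this paper's framework goes, false in spirit: the two objects are encodings of \emph{different} branching programs. Restricting $\mathrm{Barrington}(G)$ to $s=b$ does not yield $\mathrm{Barrington}(C_b)$ --- the surviving steps still carry the matrices of $G$'s program (including all steps belonging to the other branch), with a different length and input map unless you pad very carefully, and even after padding the per-step matrices differ. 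The randomization $\RND_p$ (Kilian-style conjugation by the $R_i$, the $\alpha$ multipliers, the bookend vectors) is nowhere claimed to hide \emph{which} matrices were encoded information-theoretically; if it did, \Cref{assump: equivalent program} would be unnecessary and the scheme would be statistically secure, which is precisely what one cannot hope for here. \Cref{assump: equivalent program} only speaks about two functionally equivalent partial assignments of the \emph{same} branching program, so it cannot bridge $\widehat{\RND}(BP^{\ast})|_{s=b}$ and $\widehat{\RND}(BP_b)$ for you, and your fallback hybrid chain hits the same wall at every link.

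The paper's proof avoids this entirely by never applying Barrington to the input circuit itself. It fixes a poly-size universal circuit $\U$ with $\U(C,m)=C(m)$, applies Barrington once to $\U$ to get a single program $UBP_{\lambda}$ that depends only on $\lambda$, and defines $\iO(\lambda,c)=\GARBLE\bigl(\widehat{\RND}(UBP_{\lambda}),(I_C,\sigma_c)\bigr)$, i.e., the circuit description is hardwired purely through the partial assignment. Then two circuits $c_1,c_2$ computing the same function give two functionally equivalent partial assignments $(I_C,\sigma_{c_1})$, $(I_C,\sigma_{c_2})$ of the \emph{same} program $UBP_{\lambda}$, and \Cref{assump: equivalent program} applies verbatim with no statistical matching step. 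To repair your argument you should replace your obfuscator (garbling $\mathrm{Barrington}(C)$ with the empty assignment) by this universal-circuit construction; your completeness and efficiency checks then go through essentially unchanged.
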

\begin{proof}
Fix a constant $\gamma$, for any security parameter $\lambda$ let $\mathcal{C}_{\lambda}$ be the class of circuits of depth $\gamma \log \lambda$ and size at most $\lambda$. Let $U_{\lambda}$ be a poly-sized universal circuit for this circuit class, i.e. $U_{\lambda}(C, m)=C(m)$ for all $C \in \mathcal{C}_{\lambda}$ and $m \in\{0,1\}^{n}$. Furthermore, all circuits $C \in \mathcal{C}_{\lambda}$ can be encoded as an $\ell=\ell(\lambda)$ bit string as input to $U$. Let $UBP_{\lambda}(C, m)$ be the branching program of the universal circuit $U_{\lambda}(C, m)$ obtained by applying Barrington's theorem.

Denote by $I_{C}$ the steps in the program $UBP_{\lambda}$ that examine the input bits from the $C$ input, and for each particular circuit $c$ denote by $\left(I_{C}, \sigma_{c}\right)$ the partial assignment that fixes the bits of that circuit in the input of $UBP_{\lambda}$. The obfuscator is then given by:
$$
\iO(\lambda, c)=\GARBLE\left(\widehat{\RND}\left(U B P_{\lambda}\right),\left(I_{C}, \sigma_{c}\right)\right).
$$
Functionality and polynomial slowdown are obviou\Cref{assump: equivalent program} directly asserts that for any two circuits $c_{1}, c_{2}$ that compute the same function, $\operatorname{UBP}\left(c_{1}, \cdot\right) = U B P\left(c_{2}, \cdot\right)$ and $\iO\left(\lambda, c_{1}\right)$ is computationally indistinguishable from $\iO\left(\lambda, c_{2}\right)$.
\end{proof}

\section{Fully Homomorphic Encryption}
Aside from the usage of Fully Homomorphic Encryption (FHE) in $\iO$, FHE itself is a very powerful object and was awarded the G\"{o}del Prize in 2022. It is basically an encryption scheme enabled with the ability to apply any function on the ciphertexts. 
\begin{definition}
Let $\ckt$ be a class of all polynomial sized circuits, a Fully Homomorphic Encryption scheme is an encryption scheme $(\KeyG, \Encrypt, \Dec, \Eval)$ where
\begin{itemize}
    \item $(\KeyG, \Encrypt, \Dec)$ are common to encryption scheme. In public-key encryption, for instance,
    \begin{itemize}
        \item $\KeyG(1^{\lambda})\to (pk, sk)$: $\KeyG$ takes in the security parameter $\lambda$ and outputs the public key, secret key pair;
        \item $\Encrypt(pk, m) \tor ct$: $\Encrypt$ encrypts a message with the public key $pk$ and generates a ciphertext $ct$ ($\tor$ indicates that $\Encrypt$ might be randomized);
        \item $\Dec(sk, ct) \to m$: $\Dec$ decrypts the ciphertext into the original message using the secret key.
    \end{itemize}
    \item $\Eval$ is the additional function which enables functions to apply to ciphertexts.
    
    $\Eval(pk, f, ct_1, \cdots, ct_n) \to ct^*$: $\Eval$ takes in a function $f \in \ckt$ and $n$ ciphertexts where $ct_i = \Encrypt(pk, m_i)$, and returns a new ciphertext $ct^*$ where $ct^* = \Encrypt(pk, f(m_1, \cdots, m_n))$. In English, $\Eval$ gives back the ciphertext of the output of $f$ applied to $n$ messages given only $n$ ciphertexts of the messages. 
\end{itemize}
\end{definition}

A major application scenario would be computation outsourcing in an untrusted setup. Suppose you do not have enough computational resources to compute function $f$ on your secret data $(m_1, \cdots, m_n)$, then with FHE, you can send the ciphertexts $ct_1, \cdots, ct_n$ instead. The cloud will run $\Eval$ with the public function $f$ and the ciphertexts. The cloud sends back you the resulting new ciphertext $ct^* = \Encrypt(pk, f(m_1, \cdots, m_n))$. You who has the secret key $sk$ can decrypt $ct^*$ and get $f(m_1, \cdots, m_n)$ as desired.
\subsection{Security Definition of FHE}
The security notion for FHE is IND-CPA (or other variants such as IND-CCA) as for other encryption schemes. Compactness requirement is left for interested readers to explore more. The definition is often modeled as a game played between a challenger and an adversary. The challenger runs the FHE scheme honestly and provides an $\LR$ oracle for an adversary to query. The IND-CPA game $G$ is defined as follows:
\begin{itemize}
    \item The challenger samples a random bit $b \in \zo$ and generates keys from $\KeyG(1^{\lambda})$.
    \item The adversary chooses two messages $m_0, m_1$, and queries the $\LR$ oracle to get back $ct \getsr \Encrypt(m_b)$. The adversary can repeat this step multiple times (within their computational limits).
    \item The adversary outputs $b' \in \zo$.
    \item The game outputs 1 if $b = b'$. and 0 otherwise.
\end{itemize}
The advantage of an adversary $A$ is $Adv^{ind-cpa}(A) = |\Pr[G \Rightarrow 1]-\frac{1}{2}|$. If the advantage is small, it means that the adversary cannot do better than guessing a random bit. 
\begin{definition}
A (fully homomorphic) encryption scheme is IND-CPA secure if for any PPT adversary $A$, $Adv^{ind-cpa}(A)$ is negligible in the security parameter $\lambda$.
\end{definition}
The definition makes sense since if the adversary cannot distinguish ciphertexts of any two messages of their own choice, then they cannot obtain any useful information about the message encrypted. Notice that $\Eval$ is not modeled as an oracle. That's because $\Eval$ is completely public and adversary can run it on their own, and the security should still hold.

\section{\texorpdfstring{$\Ppoly$}{TEXT} Obfuscation}
We can now construct a candidiate obfuscation for $\Ppoly$ circuits given the $\NCo$ obfuscation $\iO_{\NCo}$, WIP and a FHE scheme $(\KeyG_{FHE}, \Encrypt_{FHE}, \Dec_{FHE}, \Eval_{FHE})$. Let $\{\ckt_{\lambda}\}$ be a family of circuit classes with input size and circuit size polynomial in $\lambda$. Let $\{\U\}$ be a family of poly-sized universal circuits such that $\U(C,m) = C(m)$ for all $C \in \ckt_{\lambda}$ and all $m \in \zo^{\mathrm{poly}(\lambda)}$. $C$ as an input for $\U$ can be encoded as a $\ell=\mathrm{poly}(\lambda)$ bit string. Also let $\U(\cdot, m)$ be the universal circuit hardwired with $m$ that only has $C$ as the input. The construction consists of $\Obf$ and $\Evaluate$ algorithms:
\RestyleAlgo{boxruled}
\LinesNumbered
\begin{algorithm}[ht]
  \caption{$\Obf(1^{\lambda}, C \in \ckt_{\lambda})$}
  $(\pk^1, \sk^1) \getsr \KeyG_{FHE}(1^{\lambda})$; $(\pk^2, \sk^2) \getsr \KeyG_{FHE}(1^{\lambda})$\\
  $g_1 \getsr \Encrypt_{FHE}(\pk^1, C)$; $g_2 \getsr \Encrypt_{FHE}(\pk^2, C)$\\
  $P \getsr \iO_{\NCo}(\p1^{\sk^1, g_1, g_2})$\\
  Return $\sigma = (P, \pk^1, \pk^2, g_1, g_2)$
\end{algorithm}
\begin{algorithm}[ht]
  \caption{$\Evaluate(\sigma, m)$}
  $e_1 \getsr \Eval_{FHE}(\pk^1, \U(\cdot, m), g_1)$; $e_2 \getsr \Eval_{FHE}(\pk^2, \U(\cdot, m), g_2)$\\
  Simulate the prover from WIP and generates a proof $\phi$ to prove that $e_1$ and $e_2$ are computed correctly.\\
  Return $P(m, e^1, e^2, \phi)$
\end{algorithm}

$\p1$ and $\p2$ are two algorithms to verify the proof $\phi$ and output the corresponding decryption result if the verification passes. $\p2$ has exactly the same codes as $\p1$ except the place where red text is placed. Notice that $\phi$ is a proof that can be verified by a $\log$ path circuit in $\NCo$, and this is why we can apply $\iO_{\NCo}$ to the programs. The perfect soundness of WIP ensures that it is impossible to generate a fake proof $\phi$ (will output 0 one hundred percent of the time if $e_1$ and $e_2$ are not generated correctly), and the indistinguishability property ensures that the witness for $\p1$ and $\p2$ are indistinguishable.
\begin{algorithm}[ht]
  \caption{$\p1^{\sk^1, g_1, g_2}(m, e_1, e_2, \phi)$\textcolor{red}{//$\p2^{\sk^2, g_1, g_2}(m, e_1, e_2, \phi)$}}
  Verify if $\phi$ is a valid proof for the NP statement ($e_1 = \Eval_{FHE}(\pk^1, \U(\cdot, m), g_1)$ and $e_2 = \Eval_{FHE}(\pk^2, \U(\cdot, m), g_2)$) \\
  Return $\Dec_{FHE}(\sk^1, e_1)$ if $\phi$ is valid; else return 0 \textcolor{red}{//Return $\Dec_{FHE}(\sk^2, e_2)$ if check passes instead}
\end{algorithm}

The rough idea is that we use FHE to hide the circuit as a ciphertext and apply circuit to the ciphertext using $\Eval$, and use $\iO_{\NCo}$ to verify FHE evaluation is computed correctly. Following the correctness of FHE and $\iO_{\NCo}$, if everything is honestly generated, then $g_1$ and $g_2$ are the ciphertexts of $C$ under two different public keys, $P$ has the same input-output behavior as $\p1$. Then $e_1$ is the ciphertext of the function $\U(\cdot, m)$ applied to the original message $C$ of the ciphertext $g_1$, i.e., $e_1 = \Encrypt_{FHE}(\pk^1, \U(C, m)) = \Encrypt_{FHE}(\pk^1, C(m))$. Similarly, $e_2 = \Encrypt_{FHE}(\pk^2, C(m))$. Finally, $\p1$ will pass the check and output $\Dec_{FHE}(\sk^1, e_1) = \Dec_{FHE}(\sk^1, \Encrypt_{FHE}(\pk^2, C(m))) = C(m)$, which is also the output of $P$ because they have the same behavior. Hence, at the end of $\Evaluate$, the obfuscation indeed returns $C(m)$ as desired.

Intuitively, given $\sigma$, this construction obfuscates $C$ since FHE satisfies IND-CPA security, and an adversary cannot get any information about $C$ from the ciphertexts. The adversary cannot get information from $P$ neither since we assume $\iO_{\NCo}$ satisfies indistinguishability.

\begin{theorem}
The $\iO$ construction above for all poly-sized circuits is secure in the indistinguishability game under assumptions we made earlier. 
\end{theorem}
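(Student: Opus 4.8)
The plan is to prove this by a hybrid argument over the two FHE key slots, using the $\iO_{\NCo}$ indistinguishability and the IND-CPA security of the FHE scheme as the two interchangeable hammers, with WIP's witness indistinguishability as the glue that lets us swap which secret key the $\NCo$ program uses. Fix two circuits $C_0, C_1 \in \ckt_\lambda$ with $C_0(m) = C_1(m)$ for all $m$; we must show $\Obf(1^\lambda, C_0) \approx \Obf(1^\lambda, C_1)$. I would exhibit a chain of hybrid distributions $H_0, H_1, \dots, H_k$ with $H_0 = \Obf(1^\lambda, C_0)$ and $H_k = \Obf(1^\lambda, C_1)$, and argue each adjacent pair is computationally indistinguishable.

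First I would set $H_0 = (\,P, \pk^1, \pk^2, g_1, g_2\,)$ where $g_1 = \Encrypt_{FHE}(\pk^1, C_0)$, $g_2 = \Encrypt_{FHE}(\pk^2, C_0)$, and $P = \iO_{\NCo}(\p1^{\sk^1, g_1, g_2})$. The first step is to change $g_2$ from an encryption of $C_0$ to an encryption of $C_1$; this is indistinguishable by IND-CPA security of the second FHE instance, because the program $P$ in this hybrid uses only $\sk^1$, so $\sk^2$ never appears and the reduction can play the IND-CPA game in slot $2$. The second step is the crucial one: switch $P$ from $\iO_{\NCo}(\p1^{\sk^1, g_1, g_2})$ to $\iO_{\NCo}(\p2^{\sk^2, g_1, g_2})$. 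For this I must argue that $\p1$ and $\p2$ compute the same function on all inputs $(m, e_1, e_2, \phi)$: by perfect soundness of WIP, a valid $\phi$ forces $e_1 = \Eval_{FHE}(\pk^1, \U(\cdot,m), g_1)$ and $e_2 = \Eval_{FHE}(\pk^2, \U(\cdot,m), g_2)$, hence by FHE correctness $\Dec_{FHE}(\sk^1, e_1) = C_0(m)$ and $\Dec_{FHE}(\sk^2, e_2) = C_1(m)$, and since $C_0(m) = C_1(m)$ these agree; if $\phi$ is invalid both output $0$. So $\p1^{\sk^1,g_1,g_2}$ and $\p2^{\sk^2,g_1,g_2}$ are functionally equivalent and $\iO_{\NCo}$ indistinguishability applies — this is where one checks the program is genuinely an $\NCo$ circuit, which the paragraph before the theorem asserts via the $\log$-depth verifiability of WIP proofs. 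Now that $P$ uses only $\sk^2$, the third step changes $g_1$ from an encryption of $C_0$ to an encryption of $C_1$, again by IND-CPA security, now in slot $1$. Finally, a fourth step switches $P$ back from $\iO_{\NCo}(\p2^{\sk^2,g_1,g_2})$ to $\iO_{\NCo}(\p1^{\sk^1,g_1,g_2})$ with both $g_1, g_2$ now encrypting $C_1$, by the same functional-equivalence-plus-$\iO_{\NCo}$ argument; the resulting distribution is exactly $\Obf(1^\lambda, C_1)$. Chaining the four negligible bounds via the triangle inequality gives the result.

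The main obstacle is the functional-equivalence argument underlying the $\iO_{\NCo}$ swaps, and in particular being careful that the $\NCo$ circuit $\p1$ (resp. $\p2$) has its FHE secret key, both ciphertexts $g_1, g_2$, and a description of $\U(\cdot, \cdot)$ hardwired, and must nonetheless verify the WIP proof and decrypt within $\log$-depth — this requires that the chosen WIP has an $\NCo$ verifier and that FHE decryption for the relevant ciphertexts is in $\NCo$, which is a real assumption on the underlying primitives that the construction is implicitly invoking. A secondary subtlety is that in the intermediate hybrid $H_2$ (after the second swap) the obfuscated program uses $\sk^2$ while $g_1$ still encrypts $C_0$ and $g_2$ encrypts $C_1$; one must double-check that $\p2$ is still functionally equivalent to $\p1$ in that hybrid, which again follows from $C_0(m) = C_1(m)$ together with perfect soundness, but it is the step where an off-by-one in the hybrid ordering would break the proof, so I would state each hybrid's $(g_1, g_2, \text{program})$ triple explicitly in a small table before doing the reductions. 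Everything else — correctness of the scheme, polynomial slowdown — is routine and already sketched in the text preceding the theorem.
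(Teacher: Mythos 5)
Your proposal is correct and takes essentially the same route as the paper's proof: the identical four-step hybrid chain (switch $g_2$ to an encryption of $C_1$ by IND-CPA, swap $\p1$ to $\p2$ via functional equivalence from WIP perfect soundness plus $\iO_{\NCo}$ indistinguishability, switch $g_1$ by IND-CPA, then swap back to $\p1$). Your extra care about which secret key each intermediate program uses and about the $\NCo$ verifiability of the WIP proof only makes explicit what the paper leaves implicit.
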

\begin{proof}
We proceed the proof by a series of hybrid arguments. Recall that in the indistinguishability game, we want to show $\iO(C_0)$ and $\iO(C_1)$ are computationally indistinguishable for all pairs of $C_0, C_1 \in \ckt_{\lambda}$ where $C_0(x) = C_1(x)$ for all inputs $x$. In hybrid arguments, we start from the game for $\iO(C_0)$ and move one step a time until we get to the game for $\iO(C_1)$, and each move should be computationally indistinguishable from each other.
\begin{itemize}
    \item Hyb$_0$: We start from the honestly executed scheme for $C_0$. It is the construction defined above with $C=C_0$. In particular, $g_1 = \Encrypt_{FHE}(\pk^1, C_0)$, $g_2 = \Encrypt_{FHE}(\pk^2, C_0)$ and $P = \iO_{\NCo}(\p1^{\sk^1, g_1, g_2})$. All other codes stay the same. The ultimate goal is to move to the scheme for $C=C_1$.
    \item Hyb$_1$: Replace the ciphertext $g_2 = \Encrypt_{FHE}(\pk^2, C_1)$.
    \item Hyb$_2$: Replace the obfuscated program $P = \iO_{\NCo}(\p2^{\sk^2, g_1, g_2})$.
    \item Hyb$_3$: Replace the ciphertext $g_1 = \Encrypt_{FHE}(\pk^1, C_1)$.
    \item Hyb$_4$: Replace the obfuscated program $P = \iO_{\NCo}(\p1^{\sk^1, g_1, g_2})$. In particular, we now have $g_1 = \Encrypt_{FHE}(\pk^1, C_1)$, $g_2 = \Encrypt_{FHE}(\pk^2, C_1)$ and $P = \iO_{\NCo}(\p1^{\sk^1, g_1, g_2})$.
\end{itemize}

Next, we want to prove that each hybrid is computationally indistinguishable from the next one, and hence Hyb$_0$ is indistinguishable from Hyb$_4$.
\begin{itemize}
    \item Hyb$_0$ to Hyb$_1$: Hyb$_1$ basically changes from the ciphertext of $C_0$ to the ciphertext of $C_1$ under the same key $\pk^2$. Suppose by contradiction, Hyb$_0$ is not indistinguishable from Hyb$_1$, then there exists a distinguisher. We can then construct an IND-CPA adversary which queries the challenger with two messages $C_0$ and $C_1$, and receive a ciphertext $\Encrypt_{FHE}(\pk^2, C_b)$. Then the IND-CPA adversary can use the ciphertext to simulate the $\iO$ construction and ask the distinguisher to distinguish which ciphertext is used. Since we successfully construct an IND-CPA adversary for the FHE scheme, we reach the contradiction that the FHE is IND-CPA secure.
    \item Hyb$_1$ to Hyb$_2$: In this move, we change $P$ from the obfuscation of $\p1$ to the obfuscation of $\p2$. Notice that $\p1(x) = \p2(x)$ for all inputs: $\phi$ is the proof for the same $e_1$ and $e_2$, $\Dec_{FHE}(\sk^1, e_1) = C_0(m) = C_1(m) = \Dec_{FHE}(\sk^2, e_2)$. Therefore, if we can distinguish Hyb$_1$ from Hyb$_2$, then we can distinguish $\iO(\p1)$ from $\iO(\p2)$ where $\p1$ and $\p2$ can be computed by a $\NCo$ circuit, contradicting the assumption that our $\iO_{\NCo}$ satisfies indistinguishability.
    \item Hyb$_2$ to Hyb$_3$: Follows the same reasoning as Hyb$_0$ to Hyb$_1$.
    \item Hyb$_3$ to Hyb$_4$: Follows the same reasoning as Hyb$_1$ to Hyb$_2$.
\end{itemize}
\end{proof}
\pagebreak
\bibliography{main}
\bibliographystyle{alpha}

\end{document}